\newcommand{\beq}{\begin{eqnarray}}
\newcommand{\eeq}{\end{eqnarray}}
\newcommand{\be}{\begin{equation}}
\newcommand{\ee}{\end{equation}}
\newcommand{\bea}{\begin{eqnarray}}
\newcommand{\eea}{\end{eqnarray}}
\newcommand{\ba}{\begin{eqnarray}}
\newcommand{\ea}{\end{eqnarray}}
\newtheorem{thmTEMA}{Theorem}[section]
\newtheorem{coroTEMAi}{Corollary}[section]
\newtheorem{defTEMAi}{Definition}[section]
\definecolor{green1}{RGB}{0,128,0} 
\newcommand\orcidjulio{{\href{https://orcid.org/0000-0001-7405-8852}{\orcidicon}}}
\newcommand\orcidrogerio{{\href{https://orcid.org/0000-0001-7848-5472}{\orcidicon}}}
\newcommand{\orcidicon}{%
	\begin{tikzpicture}
	\draw[lime, fill=lime] (0,0)
		circle [radius=0.16]
		node[white] {{\fontfamily{qag}\selectfont \tiny ID}};
	\draw[white, fill=white] (-0.0625,0.095)
		circle [radius=0.007];
	\end{tikzpicture}	\hspace{-2mm}
}
\begin{document}

\title{A unified approach to spinor duals via Clifford algebras and $G_\Omega$ groups}

\author{R. T. Cavalcanti\orcidrogerio}
\email{rogerio.cavalcanti@ime.uerj.br}
\affiliation{Institute of Mathematics and Statistics, Rio de Janeiro State University, 20550-900, Rio de Janeiro, Brazil}
\affiliation{Department of Physics, São Paulo State University, 12516-410, Guaratingueta, Brazil}

\author{J. M. Hoff da Silva\orcidjulio}
\email{julio.hoff@unesp.br}
\affiliation{Department of Physics, São Paulo State University, 12516-410, Guaratingueta, Brazil}

\medbreak
\begin{abstract} 
Recent developments in the construction of generalized Dirac duals have revealed, within the structure of the Clifford algebra $\mathbb{C}\otimes\mathcal{C}\ell_{1,3},$ the existence of distinct algebraic formulations of spinors duals with potential applications in quantum field theoretic models.  In this work, after reviewing the matrix formulation, we employ the recent covariant formulation of the generalized spinor dual and establish its interplay with the algebra $\mathcal{C}\ell_{1,3}$. We construct dual mappings governed by groups denoted by $G_\Omega$ and introduce the notion of $\Omega$-equivalence classes as a tool to classify dual spinors from a group-theoretic perspective.
\end{abstract}

% \pacs{04.50.Kd, 04.40.Dg, 04.40.-b}

\keywords{Spinor duals, Clifford algebras, Mapping groups.}

\maketitle

%********************************************************
\section{Introduction}

Inner products and related structures, such as quadratic forms, bilinear forms, and metrics, constitute a set of fundamental and essential mathematical ingredients in modern physics. In quantum mechanics, the inner product on Hilbert spaces plays a central role, enabling the extraction of the theory's observables \cite{sakurai1986modern}. In special relativity, the bilinear form known as the Minkowski metric facilitated a far more powerful and elegant reformulation of the theory, while also paving the way for general relativity, where metrics are defined on more general Lorentzian manifolds \cite{wald2024general}. Dirac theory is no exception, as the inner product is also central. The key difference, however, is that in these other theories, the relevant inner products or metrics are either fixed and determined by the theory's structure, as in quantum mechanics and Minkowski spacetime, or determined \emph{a posteriori}, as in general relativity, through the Einstein field equations. In Dirac theory, by contrast, the inner product is a choice—a canonical one, since it yields the fundamental results of quantum field theory in particle physics, such as the appropriate orthogonality relations and locality properties for fermionic fields \cite{Peskin:1995ev}—yet it remains, ultimately, a choice. Despite its undeniable importance, this choice may obscure potential avenues for investigating foundational problems in theoretical physics, such as the dark matter problem and the search for first-principle candidates to describe it \cite{Ahluwalia:2004sz, Ahluwalia:2004ab, Ahluwalia:2016rwl}, or even as a source for controlled \emph{ab initio} Lorentz symmetry-breaking theories. A consistent investigation into the reformulation and generalization of the dual originally proposed by Dirac must therefore revisit the foundational aspects of spinor theory \cite{Vaz:2016qyw}.

Spinors have played a fundamental role in the physics of fermionic particles since Pauli’s formulation of spin \cite{pauli} and Dirac’s theory of the electron \cite{dirac}. The underlying mathematical concept, however, originates in the classical work of Élie Cartan \cite{cartan}, with a subsequent algebraic version systematized by Chevalley \cite{chev}. Since then, spinor theory has undergone extensive development, and both its mathematical and physical aspects have acquired great relevance in high-energy physics \cite{Peskin:1995ev}, geometry \cite{Lawson:1998yr}, and the study of torsion in gravity \cite{Kibble:1961ba, Hehl:1976kj, Vignolo:2021frk, Fabbri:2024avj}. As mentioned, novel duals have emerged in quantum theories proposed as candidates to describe dark matter \cite{Ahluwalia:2004ab,Ahluwalia:2009rh, Agarwal:2014oaa}. In particular, Elko spinors \cite{Ahluwalia:2004ab}, whose most relevant feature for our present analysis is that they require a new dual formulation to achieve physical consistency \cite{Ahluwalia:2004ab}. This necessity culminated in the development of a systematic theory for spinor duals \cite{Ahluwalia:2016rwl,epls, Rogerio:2017hwz, HoffdaSilva:2019ykt, deGracia:2024yei}, based on a careful set of physical and formal requirements. The possibility of defining new duals has also opened the way to revisiting Lounesto’s classification of spinors \cite{Lounesto:2001zz, Cavalcanti:2014wia, HoffdaSilva:2017waf} and its subsequent extensions \cite{daRocha:2005ti, Bonora:2014dfa, Fabbri:2017lvu, Arcodia:2019flm, BuenoRogerio:2019kgd, BuenoRogerio:2019yqb, Rogerio:2023kcp, Rogerio:2024lfh, Rogerio:2025avn}.

In this paper, we continue the investigation initiated in \cite{Ahluwalia:2016rwl} and further extended and generalized in \cite{Rogerio:2017hwz, HoffdaSilva:2019ykt, Cavalcanti:2020obq}, focusing on generalized spinor duals and the use of algebraic tools in their analysis. In particular, we employ the covariant form of the generalized spinor dual proposed in \cite{Rogerio:2025avn} and its relation to Clifford algebras to explore possible dual mappings, as introduced in \cite{Cavalcanti:2020obq}, along with their associated groups. We argue that these groups can be used to define a spinor classification based on group-theoretic results. The paper is organized as follows: In Section \ref{formulacao}, we review basic results on Clifford algebras, algebraic spinor spaces, generalized dual structures, and their matrix and multivector formulations. In Section \ref{algebras}, we revisit the proposal of studying spinor duals via other algebraic structures, with a focus on the use of Lie groups. We introduce the concept of $\Omega$-equivalence classes and their application in classifying dual structures, concluding the section with specific examples of groups. In Section \ref{cl13}, we connect the results from the previous sections and investigate the groups defined within the algebra associated with the generalized dual, namely $\mathcal{C}\ell_{1,3}$. Finally, in Section \ref{conclusao}, we summarize and discuss the main results obtained.

%********************************************************

\section{Algebraic Formulation and General Dual Spinors}\label{formulacao}

The formal structure of spinors is most clearly revealed when studied within the framework of Clifford algebras~\cite{Vaz:2016qyw, Lounesto:2001zz}, whose definition and key results, pertinent to our discussion, are summarized below. Given a real vector space endowed with a symmetric bilinear form $g: \mathbb{R}^n \times \mathbb{R}^n \to \mathbb{R}$ of signature $(p,q)$, denoted by\footnote{$n=p+q$.} $\mathbb{R}^{p,q}$, its associated Clifford algebra is defined as follows:
\begin{defTEMAi}
The Clifford algebra $\mathcal{C}\ell_{p,q}$ associated with the quadratic space $\mathbb{R}^{p,q}$ is the unital associative algebra such that:
\begin{enumerate}
\item[(i)]\label{it1} The Clifford map $\gamma:\mathbb{R}^{p,q} \to \mathcal{C}\ell_{p,q}$ is linear and satisfies
$$
\gamma(v)\gamma(u) + \gamma(u)\gamma(v) = 2\,g(v,u),
\qquad \forall\, v,u \in \mathbb{R}^{p,q};
$$
\item[(ii)] If $(\mathcal{Y},\gamma')$ is another unital associative algebra and $\gamma' : \mathbb{R}^{p,q} \to \mathcal{Y}$ satisfies
$$
\gamma'(v)\gamma'(u) + \gamma'(u)\gamma'(v) = 2\,g(v,u),
$$
then there exists a unique homomorphism $\phi : \mathcal{C}\ell_{p,q} \to \mathcal{Y}$ such that $\gamma' = \phi \circ \gamma$.
\end{enumerate}
\end{defTEMAi}
\noindent For convenience, the explicit Clifford map is usually suppressed, and the Clifford product is represented by juxtaposition. Accordingly, the algebra elements $\gamma(e_\mu)$ and $\gamma(e_\mu)\gamma(e_\nu)$ are denoted by $\gamma_\mu$ and $\gamma_{\mu\nu}$, respectively, where $\{e_\mu\}_{\mu=0}^{n}$ are basis elements of $\mathbb{R}^{p,q}$. Analogously, $\gamma(e^\mu)$ and $\gamma(e^\mu)\gamma(e^\nu)$ are denoted by $\gamma^\mu$ and $\gamma^{\mu\nu}$, respectively, with $\{e^\mu\}_{\mu=0}^{n}$ a basis of $({\mathbb{R}^\star})^{p,q}$. Here $({\mathbb{R}^\star})^{p,q}$ denotes the dual of the vector space $\mathbb{R}^{p,q}$.

Within the structure of Clifford algebras, in addition to the classical definition of spinors as elements of the space carrying an irreducible representation of the Spin group (the Lorentz group in the case of Minkowski spacetime), there exists an important, albeit less common, algebraic definition~\cite{chev}. Algebraic spinors are minimal left ideals constructed from primitive idempotents of the underlying algebra \cite{Vaz:2016qyw,chev}. Given a Clifford algebra $\mathcal{C}\ell_{p,q}$ and a primitive idempotent $f$, the minimal left ideals take the form $\mathcal{C}\ell_{p,q}f$. Moreover, a division ring $\mathbb{K}$, isomorphic to $\mathbb{R}$, $\mathbb{C}$, or $\mathbb{H}$ (the reals, complexes, or quaternions), is obtained via $f\mathcal{C}\ell_{p,q}f$, depending on the dimension and signature of the vector space.

The mapping
$$
\begin{array}{rcl}
\cdot : \mathcal{C}\ell_{p,q}f \times \mathbb{K} &\longrightarrow& \mathcal{C}\ell_{p,q}f \\[2pt]
(\psi,a) &\longmapsto& \psi \cdot a \equiv \psi a,
\end{array}
$$
defines a right $\mathbb{K}$-module structure on $\mathcal{C}\ell_{p,q}f$. Now we have the enough structure to define algebraic spinor spaces.
\begin{defTEMAi}
The above right $\mathbb{K}$-module $\mathcal{C}\ell_{p,q}f$ is called the \textit{algebraic spinor space} of the algebra $\mathcal{C}\ell_{p,q}$, denoted by $\mathbb{S}_{p,q}$.
Similarly, minimal right ideals $f\mathcal{C}\ell_{p,q}$ can be constructed, giving rise to the space $\mathbb{S}^\star_{p,q}$.
The division ring $\mathbb{K}$ and the module structure are analogous in both cases.
\end{defTEMAi}

\begin{defTEMAi}
An element of $\mathbb{S}^\star_{p,q}$ acts linearly on $\mathbb{S}_{p,q}$, with image in $\mathbb{K}$.
Since $\mathbb{S}^\star_{p,q} \simeq \mathcal{L}(\mathbb{S}_{p,q}, \mathbb{K})$, one may introduce an inner product
$$
\beta: \mathbb{S}_{p,q} \times \mathbb{S}_{p,q} \to \mathbb{K},
\qquad
\beta(\psi, \phi) = \psi^\star \phi,
$$
where $\psi^\star \in \mathbb{S}^\star_{p,q}$ denotes the adjoint (or dual) of $\psi$ with respect to $\beta$.
\end{defTEMAi}

Right ideals can be turned into left ideals, and vice versa, through algebra involutions.
However, idempotents are not necessarily preserved.
If $\alpha$ denotes a generic involution, then $\alpha(\mathcal{C}\ell_{p,q} f) = \alpha(f)\mathcal{C}\ell_{p,q}$, but in general $\alpha(f) \neq f$.
Nevertheless, there always exists an element $h \in \mathcal{C}\ell_{p,q}$ such that $\alpha(f) = h^{-1} f h$ and $\alpha(h) = h$~\cite{Vaz:2016qyw, chev}.
This allows one to define
$$
\psi^\star = \alpha(h\psi) = h\,\alpha(\psi),
$$
leading to the inner product with the respective adjoint involution.

\begin{defTEMAi}
Given an involution $\alpha$ and $h \in \mathcal{C}\ell_{p,q}$ such that $\alpha(f) = h^{-1} f h$ and $\alpha(h) = h$, then
$$
\beta(\psi, \phi) = h\,\alpha(\psi)\,\phi\,f
\in f\mathcal{C}\ell_{p,q}f
\simeq \mathbb{K},
$$
where $\alpha$ is the \textit{adjoint involution} of $\beta$.

\end{defTEMAi}

When dealing with complexified Clifford algebras $\mathbb{C}\!\otimes\!\mathcal{C}\ell_{p,q}$, the composition of complex conjugation with the algebraic involutions modifies the adjoint involution of the inner product.
The relevant situation for us is when the adjoint involution corresponds to Hermitian conjugation in the matrix representation of the algebra. This is achieved whenever
$$
\alpha^*(a) = h^{-1} a^\dagger h,
\qquad h^\dagger = h,
\qquad a \in \mathbb{C}\!\otimes\!\mathcal{C}\ell_{p,q}.
$$
The general adjoint (or dual) spinor $\psi^\star \in \mathbb{C}\!\otimes\!\mathbb{S}^\star_{p,q}$ is then given by
\begin{equation}\label{gdual}
\psi^\star = h\,\alpha^*(\psi) = \psi^\dagger h = [\,h\,\psi\,]^\dagger.
\end{equation}
As a complex associative algebra, $\mathbb{C}\!\otimes\!\mathcal{C}\ell_{p,q}$ is isomorphic to the full matrix algebra $\mathbb{C}\!\otimes\!\mathcal{C}\ell_{p,q} \;\simeq\; M_{4}(\mathbb{C})$. Such isomorphism can be explicitly realized in the Weyl representation of the $\gamma$ matrices\footnote{Throughout this work, we shall adopt the Einstein convention for sum and the standard convention for indices in physics, where Greek indices ($\mu, \nu, \ldots$) range from $0$ to $3$, and Latin indices ($i, j, \ldots$) take values from $1$ to $3$.}
\begin{equation}
\gamma^{\mu}=
\begin{pmatrix}
0 & \bar{\sigma}^{\mu} \\
\sigma^{\mu} & 0
\end{pmatrix},
\qquad
\gamma_{5}=
\begin{pmatrix}
-\,\mathbb{I} & 0 \\
0 & \mathbb{I}
\end{pmatrix},
\qquad
\sigma^{\mu}=(\mathbb{I},\vec{\sigma}),\;\;
\bar{\sigma}^{\mu}=(\mathbb{I},-\vec{\sigma}),
\end{equation}
where $\vec{\sigma} = (\sigma^1,\sigma^2,\sigma^3)$ and $\sigma^i$ denote the Pauli matrices.

Letting $h = \eta\,\Delta$, it can be shown that $\Delta$ does not affect the Lorentz invariance of the inner product, which is ensured by $\eta = \gamma^0$~\cite{Ahluwalia:2016rwl, HoffdaSilva:2019ykt}, highlighting $\Delta$ as the sole source of freedom in defining a Lorentz-covariant dual.

\begin{thmTEMA}
Defining $h = \gamma^0\,\Delta$, the matrix representation of $\Delta$ has the block structure,
$$
\Delta =
\begin{bmatrix}
A & B \\[2pt]
C & A^\dagger
\end{bmatrix},
\quad \text{with} \quad B^\dagger = B, \quad \text{and} \quad C^\dagger = C.
$$
\end{thmTEMA}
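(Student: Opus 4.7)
The plan is to exploit the single algebraic constraint that $h$ must satisfy in order for the adjoint involution to reduce to Hermitian conjugation, namely $h^\dagger = h$. Substituting $h = \gamma^0\Delta$ and using that in the Weyl representation $(\gamma^0)^\dagger = \gamma^0$, the Hermiticity condition $(\gamma^0\Delta)^\dagger = \gamma^0\Delta$ rearranges to the single matrix equation
\begin{equation}
\Delta^\dagger\,\gamma^0 = \gamma^0\,\Delta. \nonumber
\end{equation}
Everything that follows is a direct consequence of this relation together with the explicit Weyl form of $\gamma^0$, so the argument is essentially computational rather than structural.

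Next, I would write $\Delta$ as a generic $2\times 2$ array of $2\times 2$ blocks,
\begin{equation}
\Delta=\begin{pmatrix} A & B \\ C & D \end{pmatrix},\qquad
\Delta^\dagger=\begin{pmatrix} A^\dagger & C^\dagger \\ B^\dagger & D^\dagger \end{pmatrix}, \nonumber
\end{equation}
and use the anti-block-diagonal form $\gamma^0=\bigl(\begin{smallmatrix}0&\mathbb{I}\\ \mathbb{I}&0\end{smallmatrix}\bigr)$ to evaluate both sides of the Hermiticity constraint. Multiplying $\gamma^0$ on the left of $\Delta$ permutes its block rows, while multiplying $\gamma^0$ on the right of $\Delta^\dagger$ permutes its block columns, so the two sides of the constraint become
\begin{equation}
\begin{pmatrix} C^\dagger & A^\dagger \\ D^\dagger & B^\dagger \end{pmatrix}
= \begin{pmatrix} C & D \\ A & B \end{pmatrix}. \nonumber
\end{equation}

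Block-wise matching then yields four equations: $C^\dagger=C$, $B^\dagger=B$, $D=A^\dagger$, and (redundantly) $D^\dagger=A$. The first two assert that the off-diagonal blocks $B$ and $C$ are Hermitian, and the third fixes the lower-right block as the Hermitian conjugate of the upper-left one, with the upper-left block $A$ itself remaining arbitrary in $M_2(\mathbb{C})$. Inserting these constraints back into the ansatz gives exactly the stated block form. The only delicate point in the whole argument, and the one I would be most careful about, is consistency with the representation choice: the Hermiticity of $\gamma^0$ is what allows the constraint to collapse to $\Delta^\dagger\gamma^0=\gamma^0\Delta$, and the anti-block-diagonal shape of $\gamma^0$ in the Weyl basis is what makes the off-diagonal/diagonal block pattern of $\Delta$ come out so cleanly, so the statement is tied to the chosen representation and would need to be restated covariantly in any other basis.
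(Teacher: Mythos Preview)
Your proof is correct and follows essentially the same approach as the paper: derive the constraint $\Delta^\dagger\gamma^0=\gamma^0\Delta$ from $h^\dagger=h$ with $h=\gamma^0\Delta$ and $(\gamma^0)^\dagger=\gamma^0$, then impose it on a generic matrix. The only cosmetic difference is that the paper carries out the computation entry-by-entry on a full $4\times 4$ matrix before reassembling the blocks, whereas you work directly at the $2\times 2$-block level, which is cleaner but equivalent.
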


\begin{proof}
From $h = \gamma^0\,\Delta$, $h^\dagger = h$ and $\gamma^0 =(\gamma^0)^\dagger$,  follows
\begin{equation}\label{delta}
\Delta^\dagger \gamma^0 = \gamma^0 \Delta.
\end{equation}
Taking a general complex matrix $\Delta = [a_{ij}]$ and imposing~\eqref{delta}, explicit calculations yields
$$
\Delta =
\begin{bmatrix}
a_{11} & a_{12} & a_{13} & a_{14} \\[2pt]
a_{21} & a_{22} & a^*_{14} & a_{24} \\[2pt]
a_{31} & a_{32} & a^*_{11} & a^*_{21} \\[2pt]
a^*_{32} & a_{42} & a^*_{12} & a^*_{22}
\end{bmatrix},
\qquad
a_{13}, a_{31}, a_{24}, a_{42} \in \mathbb{R}.
$$
Defining

$$
A = \begin{bmatrix}
a_{11} & a_{12}  \\[2pt]
a_{21} & a_{22}
\end{bmatrix}, \quad B= \begin{bmatrix}
 a_{13} & a_{14} \\[2pt]
 a^*_{14} & a_{24}
\end{bmatrix}=B^\dagger \quad \text{and} \quad C = \begin{bmatrix}
a_{31} & a_{32} \\[2pt]
a^*_{32} & a_{42}
\end{bmatrix}=C^\dagger,
$$
the block structure $\Delta =
\begin{bmatrix}
	A & B \\[2pt]
	C & A^\dagger
\end{bmatrix}$ follows immediately.\end{proof}

Besides being general, the above matrix form is not particularly convenient for analyzing the influence of the additional parameters. Instead, we aim to preserve the full generality of $\Delta$ while expressing it in terms of the multivector structure naturally inherited from Clifford algebras, following \cite{Rogerio:2025avn}. Regarding the Dirac algebra, the isomorphism $\mathbb{C}\otimes\mathcal{C}\ell_{1,3} \;\simeq\; M_{4}(\mathbb{C})$ guarantees that there must exist a multivector in $\mathbb{C}\otimes \mathcal{C}\ell_{1,3}$ whose matrix representation matches $\Delta$. We denote this multivector as ${A}$. Let us start with a general element of $\mathbb{C}\otimes \mathcal{C}\ell_{1,3}$, denoted by $\bar{A}$. It can be expressed as a linear combination of all possible grades: scalar, vector, bivector, trivector, and pseudoscalar components.
\begin{align}\label{gen_A}
\bar{A} = \alpha
+ a_{\mu}\gamma^{\mu}
+  B_{\mu\nu}\sigma^{\mu\nu}
+ c_{\mu\nu\delta}\gamma^{\mu}\gamma^{\nu}\gamma^{\delta}
+ \beta\,\gamma^{5}, \qquad \delta \neq \mu \neq \nu \neq \delta,
\end{align}
where $\alpha,\beta\in\mathbb{C}$ are scalar coefficients,
$a_{\mu}$ and $c_{\mu\nu\delta}$ are complex four-vectors  and pseudo four-vector components,
$B_{\mu\nu}=-B_{\nu\mu}$ are components of an antisymmetric bivector,
 $\sigma^{\mu\nu}\equiv \frac i2[\gamma^\mu,\gamma^\nu] = \frac i2(\gamma^\mu\gamma^\nu-\gamma^\nu\gamma^\mu)$ form a bivector base and $\gamma^5=\,\gamma^{0}\gamma^{1}\gamma^{2}\gamma^{3}$ is the pseudoscalar.

\begin{thmTEMA}\label{A_real}%[Multivector form of $\Delta$]
Let ${A}$ denote the multivector form corresponding to $\Delta$. In this setting, ${A}$ constitutes a generic invertible element of the Clifford algebra $\mathcal{C}\ell_{1,3}$.
\end{thmTEMA}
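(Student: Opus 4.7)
The plan is to translate the Hermiticity condition from the previous theorem, $\Delta^{\dagger}\gamma^{0}=\gamma^{0}\Delta$, into a Dirac self-adjointness relation $\gamma^{0}A^{\dagger}\gamma^{0}=A$ on the multivector $A$, and to extract from it, grade by grade via the expansion \eqref{gen_A}, the constraints that identify $A$ as a generic element of the real Clifford algebra $\mathcal{C}\ell_{1,3}$.

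First I would compute how the Dirac adjoint $X\mapsto\gamma^{0}X^{\dagger}\gamma^{0}$ acts on each graded basis element. Using $(\gamma^{0})^{\dagger}=\gamma^{0}$ and $(\gamma^{i})^{\dagger}=-\gamma^{i}$, equivalently $(\gamma^{\mu})^{\dagger}=\gamma^{0}\gamma^{\mu}\gamma^{0}$, together with the Clifford anticommutation relations, I expect that the scalar, vector and bivector generators are left invariant, whereas the trivector and pseudoscalar pick up a minus sign, arising from the grade-$k$ reversal combined with the conjugation of the explicit $i$-factors hidden in $\sigma^{\mu\nu}=\tfrac{i}{2}[\gamma^{\mu},\gamma^{\nu}]$ and in the chiral element.

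Combining these signs with the complex conjugation of the scalar coefficients forced by Hermitian conjugation, the grade decomposition of $\gamma^{0}\bar{A}^{\dagger}\gamma^{0}=\bar{A}$ reduces to
\[
\alpha^{*}=\alpha,\quad a_{\mu}^{*}=a_{\mu},\quad B_{\mu\nu}^{*}=B_{\mu\nu},\quad c_{\mu\nu\delta}^{*}=-c_{\mu\nu\delta},\quad \beta^{*}=-\beta,
\]
leaving exactly $1+4+6+4+1=16$ free real parameters, which matches $\dim_{\mathbb{R}}\mathcal{C}\ell_{1,3}$. Under the natural identification absorbing the $i$-factors in the bivector and pseudoscalar generators, the surviving basis spans precisely the graded components of $\mathcal{C}\ell_{1,3}$ inside $\mathbb{C}\otimes\mathcal{C}\ell_{1,3}$. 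Generic invertibility is then automatic: the non-invertible elements of any finite-dimensional associative algebra form the zero locus of the determinant of the regular representation, a proper Zariski-closed subset, so $A$ is invertible on its open dense complement.

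The main obstacle I foresee is the bookkeeping of the sign flips in the first step—simultaneously tracking the complex conjugation of the coefficients, the reversal signs from reordering $\gamma$'s, and the conjugation of the $i$-factors in the paper's bivector and pseudoscalar basis—and, in the concluding identification, checking that the $16$-dimensional real subspace cut out by Hermiticity genuinely matches $\mathcal{C}\ell_{1,3}$ inside $\mathbb{C}\otimes\mathcal{C}\ell_{1,3}$, rather than merely being isomorphic to it as a real vector space.
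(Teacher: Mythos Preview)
Your plan coincides with the paper's proof: impose $\gamma^{0}A^{\dagger}\gamma^{0}=A$ on the expansion~\eqref{gen_A}, compute the Dirac adjoint grade by grade, and count the surviving $16$ real parameters; invertibility is then generic. One sign to fix in your bookkeeping: with the paper's convention $\gamma^{5}=\gamma^{0}\gamma^{1}\gamma^{2}\gamma^{3}$ (no explicit $i$), the Dirac adjoint of any product $\gamma^{\mu_{1}}\cdots\gamma^{\mu_{k}}$ is precisely its reversal, so grade~$4$ is \emph{invariant} and one obtains $\beta^{*}=\beta$ rather than $\beta^{*}=-\beta$; your $c^{*}_{\mu\nu\delta}=-c_{\mu\nu\delta}$ for the trivector is the correct outcome of that same reversal rule (and in fact differs from the sign the paper's displayed computation records). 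Either way the count of $16$ real parameters is unaffected, so the conclusion stands.

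Your closing concern is well placed and worth flagging: the map $A\mapsto\gamma^{0}A^{\dagger}\gamma^{0}$ reverses products, so its fixed-point set is not multiplicatively closed and hence is not literally the real subalgebra $\mathcal{C}\ell_{1,3}\subset\mathbb{C}\otimes\mathcal{C}\ell_{1,3}$, only a $16$-dimensional real subspace isomorphic to it. The paper's proof does not address this point either; its subsequent applications rely only on the parameter count and the Lie-group identification with $GL(2,\mathbb{H})$, which your argument delivers.
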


\begin{proof}
Analogously to the derivation of the block structure form of $\Delta$ from the condition $\Delta^\dagger \gamma^0 = \gamma^0 \Delta$, the coefficients of $A$ can be determined by taking it as a general element of $\mathbb{C}\otimes\mathcal{C}\ell_{1,3}$ and imposing the constraint $\gamma^{0} A^\dagger \gamma^{0} = A$. In fact, taking $A$ as in eq.~\eqref{gen_A},
\begin{align*}
 {\gamma}^{0}{A}^\dagger{\gamma}^{0} &= {\gamma}^{0}\left( \alpha
+ a_{\mu}\gamma^{\mu}
+  B_{\mu\nu}\sigma^{\mu\nu}
+ c_{\mu\nu\delta}\gamma^{\mu}\gamma^{\nu}\gamma^{\delta}
+ \beta\,\gamma^{5}\right)^\dagger{\gamma}^{0}\\
&= \alpha^* +a_{\mu}^*{\gamma}^{0}{\gamma^{\mu}}^\dagger{\gamma}^{0}
+ B_{\mu\nu}^*{\gamma}^{0}(\sigma^{\mu\nu})^\dagger{\gamma}^{0}
+ c_{\mu\nu\delta}^*{\gamma}^{0}(\gamma^{\mu}\gamma^{\nu}\gamma^{\delta})^\dagger{\gamma}^{0}
+ \beta^*\,{\gamma}^{0}{\gamma^{5}}^\dagger{\gamma}^{0}\\
&= \alpha^* +a_{\mu}^*{\gamma^{\mu}}
+ B_{\mu\nu}^*\sigma^{\mu\nu}
+ c_{\mu\nu\delta}^*\gamma^{\mu}\gamma^{\nu}\gamma^{\delta}
+ \beta^*\,{\gamma^{5}}.
\end{align*}
Thus enforcing $\gamma^{0} A^\dagger \gamma^{0} = A$ constrains all coefficients to be real. Moreover, we observe that $\Delta$ carries 16 real degrees of freedom, and the condition $\det(\Delta)\neq 0$ ensures invertibility without altering this count. Similarly, $A$ has 16 real coefficients and remains fully generic. Under these conditions, $A$ corresponds to a generic invertible element of the Clifford algebra $\mathcal{C}\ell_{1,3}$.
\end{proof}
For the explicit relation between coefficients of $A$ and $\Delta$, see ref. \cite{Rogerio:2025avn}. In Section \ref{cl13}, we explicitly show via a Lie group isomorphism that $A$ has the same 16 degrees of freedom as $\Delta$.

\section{Dual Mappings and Group Structures: The First Attempt}\label{algebras}

We may now explore the dual construction introduced in the previous section by defining a dual mapping $\Omega$, a structure that preserves the full generality of $\Delta$ \cite{Cavalcanti:2020obq}. The main advantage of this approach is that, once defined in this way, certain unexpected algebraic structures within the set of mappings $\Omega$ emerge naturally. We argue that these algebraic structures can be potentially useful in classifying theories of dual spinors. The idea is to fix a starting dual and use the map to connect it to different dual definitions. Our first choice is to start from the dual introduced by Ahluwalia, $\psi^\star = \psi^\dagger\gamma^0\Xi$ \cite{Ahluwalia:2016rwl} (see the Appendix for an explicit form of $\Xi^\dagger$). The $\Omega$ mapping is given by the following definition.
\begin{defTEMAi}
Using the mapping $\Omega$, an arbitrary spinor dual $\psi^\star$ can be defined as
\begin{equation}\label{dualmap}
\psi^\star = [\,\Omega\,\gamma^0\,\Xi\,\psi\,]^\dagger
            = \psi^\dagger\,\gamma^0\,\Xi\,\Omega.
\end{equation}
\end{defTEMAi}
\noindent The relation between $\Delta$ and $\Omega$ follows from comparing Eqs.~\eqref{gdual} with $h=\gamma^0 \Delta$ and~\eqref{dualmap}, giving
$$
\gamma^0\Delta = \Omega\,\gamma^0\,\Xi
\quad \Longleftrightarrow \quad
\Delta = \gamma^0\,\Omega\,\gamma^0\,\Xi,
\quad \text{or} \quad
\Omega = \gamma^0\,\Delta\,\Xi\,\gamma^0.
$$
Using Eq.~\eqref{delta}, one obtains the fundamental restriction on $\Omega$:
\begin{equation}\label{fomega}
\Omega^\dagger
 = \gamma^0\,\Xi^\dagger\,\Delta^\dagger\,\gamma^0
 = \Xi\,\Delta
 = \Xi\,\gamma^0\,\Omega\,\gamma^0\,\Xi.
\end{equation}

We are now in a position to investigate the algebraic structure associated with the mappings $\Omega$. The first question to address is whether a set of such mappings forms a group. To this end, a subset $G_{\Omega} \subset GL(4,\mathbb{C})$ must satisfy associativity, contain the identity element, admit inverses, and be closed under composition. This leads to the following theorem.

\begin{thmTEMA}
A set of $\Omega_i$ forms a group if, and only if, every pair of elements in the set comutates.
\end{thmTEMA}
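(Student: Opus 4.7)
The plan is to reduce the group axioms for $G_\Omega \subset GL(4,\mathbb{C})$ to the single non-trivial requirement of closure under matrix multiplication, and then show that this condition is equivalent, element by element, to pairwise commutativity. The starting point is the defining constraint~\eqref{fomega}, $\Omega^\dagger = \Xi\,\gamma^0\,\Omega\,\gamma^0\,\Xi$, together with the two key algebraic facts that $(\gamma^0)^2 = \mathbb{I}$ and $\Xi^2=\mathbb{I}$ (the latter following from the explicit form of $\Xi$ recalled in the Appendix). These imply that $P:=\Xi\gamma^0$ and $Q:=\gamma^0\Xi$ are mutually inverse, so the constraint may be rewritten compactly as $\Omega^\dagger = P\,\Omega\,Q$ with $PQ=QP=\mathbb{I}$.

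I would first dispose of the easy axioms. Associativity is inherited from $GL(4,\mathbb{C})$. The identity obeys $\mathbb{I}^\dagger = \mathbb{I} = P\,\mathbb{I}\,Q$, so it satisfies the constraint and can always be adjoined to the set. For inverses, taking Hermitian conjugation of the relation $\Omega\,\Omega^{-1}=\mathbb{I}$ and substituting $\Omega^\dagger = P\Omega Q$ gives $(\Omega^{-1})^\dagger = (P\Omega Q)^{-1} = P\,\Omega^{-1}\,Q$, so $\Omega^{-1}$ automatically sits in $G_\Omega$ whenever $\Omega$ does.

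The entire content of the theorem therefore lies in closure. Given $\Omega_1,\Omega_2$ each satisfying the constraint, I would compute
\begin{equation*}
(\Omega_1\Omega_2)^\dagger \;=\; \Omega_2^\dagger\,\Omega_1^\dagger \;=\; (P\,\Omega_2\,Q)(P\,\Omega_1\,Q) \;=\; P\,\Omega_2\,(QP)\,\Omega_1\,Q \;=\; P\,(\Omega_2\Omega_1)\,Q,
\end{equation*}
using $QP=\mathbb{I}$. On the other hand, $\Omega_1\Omega_2$ belongs to $G_\Omega$ precisely when it too obeys the constraint, i.e.\ when $(\Omega_1\Omega_2)^\dagger = P\,(\Omega_1\Omega_2)\,Q$. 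Matching the two expressions and cancelling the invertible factors $P$ on the left and $Q$ on the right yields $\Omega_2\Omega_1=\Omega_1\Omega_2$. This simultaneously proves both implications: commutativity of every pair guarantees closure, while failure of any pair to commute forces $\Omega_1\Omega_2$ out of the set, spoiling the group structure.

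The argument is essentially formal; the only point requiring external input is the algebraic identity $QP = PQ = \mathbb{I}$, which rests on the involutive character of $\Xi$. The conceptual takeaway, which I would emphasise at the end, is that the constraint~\eqref{fomega} is $\dagger$-contravariant in $\Omega$, so it interacts with ordinary (covariant) multiplication only when order is immaterial—hence the unavoidable appearance of commutativity as the obstruction to closure.
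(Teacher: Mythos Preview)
Your proof is correct and follows essentially the same route as the paper: dispatch associativity, identity, and inverses, then show that the closure condition $(\Omega_1\Omega_2)^\dagger = \Xi\gamma^0\,\Omega_1\Omega_2\,\gamma^0\Xi$ is equivalent to $[\Omega_1,\Omega_2]=0$ by expanding $(\Omega_1\Omega_2)^\dagger=\Omega_2^\dagger\Omega_1^\dagger$ and using $(\gamma^0)^2=\Xi^2=\mathbb{I}$. The only difference is cosmetic: you package $\Xi\gamma^0$ and $\gamma^0\Xi$ as mutually inverse factors $P$ and $Q$, which streamlines the bookkeeping but does not alter the argument.
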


\begin{proof}
Associativity follows directly from matrix algebra.
The identity corresponds to the case $\Delta = \Xi$.
Invertibility holds since $\det(\Omega)= \det(\Delta) \neq 0$. Moreover, for an invertible $\Omega$ obeying~\eqref{fomega}, one finds
$$
(\Omega^{-1})^\dagger = \Xi\,\gamma^0\,\Omega^{-1}\gamma^0\,\Xi.
$$
Indeed, because $\Omega^\dagger = \Xi\,\gamma^0\,\Omega\,\gamma^0\,\Xi$ and $(\gamma^0)^2 = \Xi^2 = \mathbb{I}$, we have $(\Omega^\dagger)^{-1} = \Xi\,\gamma^0\,\Omega^{-1}\gamma^0\,\Xi$, and thus $(\Omega^{-1})^\dagger = (\Omega^\dagger)^{-1}$.
The closure property is less immediate and imposes a restriction on admissible elements of $G_\Omega$.
Given $\Omega_1$ and $\Omega_2$, Eq.~\eqref{fomega} demands
$$
(\Omega_1\Omega_2)^\dagger
  = \Xi\,\gamma^0\,\Omega_1\,\Omega_2\,\gamma^0\,\Xi.
$$
However,
$$
(\Omega_1\Omega_2)^\dagger
 = \Omega_2^\dagger\,\Omega_1^\dagger
 = \Xi\,\gamma^0\,\Omega_2\,\gamma^0\,\Xi
   \,\Xi\,\gamma^0\,\Omega_1\,\gamma^0\,\Xi
 = \Xi\,\gamma^0\,\Omega_2\,\Omega_1\,\gamma^0\,\Xi.
$$
Comparing both expressions implies
$$
\Omega_1\,\Omega_2 = \Omega_2\,\Omega_1,
$$
meaning that $G_\Omega$ must be an Abelian subgroup of $GL(4,\mathbb{C})$.
\end{proof}

Regarding $\Delta$, the corresponding restriction is
$$
\Delta_1\,\Xi\,\Delta_2\,\Xi = \Delta_2\,\Xi\,\Delta_1\,\Xi.
$$
Since $\Xi^2 = \mathbb{I}$, this reduces to $\Delta_1\,\Xi\,\Delta_2 = \Delta_2\,\Xi\,\Delta_1$. This approach does not allow one to find a fully general description of $G_\Omega$. However, some specific illustrative cases are explored in the following subsection.

The key point to be emphasized here is that the groups $G_{\Omega}$ may be used to classify admissible dual theories. In this context, we introduce an equivalence relation that identifies spinor duals related by the action of $G_{\Omega}$. More precisely, we say that two dual spinors $\psi^\star$ and $\psi^\divideontimes$ are $\Omega$-equivalent, written $\psi^\star \sim_{\Omega} \psi^\divideontimes$, if there exists an element $g_{\Omega} \in G_{\Omega}$ such that
\begin{equation}
    g_{\Omega}\,\psi^\star \;=\; \psi^\divideontimes.
\end{equation}
The corresponding equivalence classes,
\begin{equation}
    [\psi]_{\Omega} \;=\; \{\, g_{\Omega}\,\psi \;\mid\; g_{\Omega} \in G_{\Omega} \,\},
\end{equation}
collect all elements of $\mathbb{S}^\star$ related by a mapping of $G_{\Omega}$. Distinct dual theories are therefore naturally identified with distinct $\Omega-$equivalence classes, providing a sistematic classification of dual spinor structures.

\subsection{Some Particular $G_\Omega$ Groups}

Once the conditions ensuring that $G_\Omega$ forms a group have been established, we may now explore a few explicit particular cases \cite{Cavalcanti:2020obq}. We shall define group elements that combine only $\gamma^0$ and $\Xi$, as seen in Table \ref{tab_elements}. The matrix forms of the elements below are presented in the Appendix.
\begin{table}[h!]
\centering
\renewcommand{\arraystretch}{1.4}
\begin{tabular}{>{$}l<{$} | >{$}l<{$}}
\hline\hline
\text{Element} & \text{Expression} \\ \hline
\mathcal{G}(\phi)  &
\dfrac{m}{2E}\{\gamma^0, \Xi\}
\\[4pt]

\mathcal{F}(\theta,\phi)  &
\dfrac{m}{2p}[\gamma^0, \Xi]
 \\[4pt]

\mathcal{F}(\theta,\phi)\mathcal{G}(\phi)
  &
\dfrac{m^2}{4Ep}[\Xi^\dagger, \Xi] \\[4pt]

\Xi^\dagger(p^\mu)  &
\gamma^0 \Xi \gamma^0 \\[4pt]

\mathcal{G}(\phi)\,\Xi^\dagger(p^\mu) &
\dfrac{m}{2E}(\Xi^\dagger\Xi + \mathbb{I})\gamma^0 \\[4pt]

\mathcal{H}(p^\mu) &
m^2 \Xi \Xi^\dagger \\[4pt]

\mathcal{H}^{-1}(p^\mu) &
m^{-2}\Xi^\dagger \Xi
 = m^{-4}\gamma^0 \mathcal{H} \gamma^0 \\
\hline\hline
\end{tabular}
\caption{Definitions of elements used to construct examples of $G_\Omega$ groups. For the explicit form, see the Appendix.}\label{tab_elements}
\end{table}
From the elements above, three group structures can be identified. Two of them are explicitly defined by
$$
G_\mathcal{F} \equiv
\{\mathbb{I}, \mathcal{G}, \mathcal{F}, \mathcal{F}\mathcal{G}\},
\qquad
G_{\Xi^\dagger} \equiv
\{\mathbb{I}, \mathcal{G}, \Xi^\dagger, \mathcal{G}\Xi^\dagger\},
$$
whose Cayley tables are displayed in Table \ref{tab_G}.
\begin{table}[ht]
\centering
\begin{tabular}{c|cccc}
\hline
$G_\mathcal{F}$ & $\mathbb{I}$ & $\mathcal{G}$ & $\mathcal{F}$ & $\mathcal{F}\mathcal{G}$ \\
\hline
$\mathbb{I}$ & $\mathbb{I}$ & $\mathcal{G}$ & $\mathcal{F}$ & $\mathcal{F}\mathcal{G}$ \\
$\mathcal{G}$ & $\mathcal{G}$ & $\mathbb{I}$ & $\mathcal{F}\mathcal{G}$ & $\mathcal{F}$ \\
$\mathcal{F}$ & $\mathcal{F}$ & $\mathcal{F}\mathcal{G}$ & $\mathbb{I}$ & $\mathcal{G}$ \\
$\mathcal{F}\mathcal{G}$ & $\mathcal{F}\mathcal{G}$ & $\mathcal{F}$ & $\mathcal{G}$ & $\mathbb{I}$ \\
\hline
\end{tabular}
\qquad and \qquad
\begin{tabular}{c|cccc}
\hline
$G_{\Xi^\dagger}$ & $\mathbb{I}$ & $\mathcal{G}$ & $\Xi^\dagger$ & $\Xi^\dagger\mathcal{G}$ \\
\hline
$\mathbb{I}$ & $\mathbb{I}$ & $\mathcal{G}$ & $\Xi^\dagger$ & $\Xi^\dagger\mathcal{G}$ \\
$\mathcal{G}$ & $\mathcal{G}$ & $\mathbb{I}$ & $\Xi^\dagger\mathcal{G}$ & $\Xi^\dagger$ \\
$\Xi^\dagger$ & $\Xi^\dagger$ & $\Xi^\dagger\mathcal{G}$ & $\mathbb{I}$ & $\mathcal{G}$ \\
$\Xi^\dagger\mathcal{G}$ & $\Xi^\dagger\mathcal{G}$ & $\Xi^\dagger$ & $\mathcal{G}$ & $\mathbb{I}$ \\
\hline
\end{tabular}
\caption{Cayley tables for
$G_\mathcal{F}\equiv\{\mathbb{I},\mathcal{G},\mathcal{F},\mathcal{F}\mathcal{G}\}$ (left)
and
$G_{\Xi^\dagger}\equiv\{\mathbb{I},\mathcal{G},\Xi^\dagger,\mathcal{G}\Xi^\dagger\}$ (right).}\label{tab_G}
\end{table}

Both groups are isomorphic to the classical Klein group $K_4$. Despite the isomorphism, $G_\mathcal{F}$ and $G_{\Xi^\dagger}$ are topologically inequivalent, since the parameters defining $G_\mathcal{F}$ are all compact. In contrast, the other group, denoted by $G_\mathcal{H}$, is not of finite order. It is generated by ${\mathbb{I}, \mathcal{F}, \mathcal{G}, \mathcal{H}, \mathcal{H}^{-1}}$ and contains $G_\mathcal{F}$ as a subgroup.

\section{Group Structures Within $\mathcal{C}\ell_{1,3}$}\label{cl13}

In the present section, we analyze the possible $G_{\Omega}$ groups using the multivector structure of Clifford algebras, as introduced at the end of Section~\ref{formulacao}. Besides providing a more elegant formulation, this approach offers access to a substantially richer set of tools for investigating the $G_{\Omega}$ groups. These tools are inherited directly from the algebraic and geometric structure encoded in Clifford algebras \cite{Vaz:2016qyw}. We start by establishing a strong, yet straightforward, result in this context: the largest possible $G_{\Omega}$ group. To this end, we replace the dual structure proposed by Ahluwalia \cite{Ahluwalia:2016rwl}, $\psi^\star = \psi^\dagger \gamma^0 \Xi$, with the standard Dirac dual $\psi^\star = \psi^\dagger \gamma^0$. In this way, the $\Delta$-operator approach becomes formally equivalent to the $\Omega$-mapping approach, since $\Omega = \gamma^0\Delta\gamma^0 = \Delta^\dagger$.

\begin{thmTEMA}
 With respect to the general dual of the Dirac algebra $\mathbb{C}\!\otimes\!\mathcal{C}\ell_{1,3}$, the largest admissible $G_{\Omega}$ group is, up to isomorphism, the group $GL(2,\mathbb{H})$.
\end{thmTEMA}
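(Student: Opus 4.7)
The plan is to leverage the structural identification given by Theorem~\ref{A_real}: every admissible $\Delta$ corresponds to a generic invertible multivector $A$ in the real Clifford algebra $\mathcal{C}\ell_{1,3}$, carrying the full $16$ real degrees of freedom and requiring only $\det A\neq 0$ for invertibility. With the standard Dirac dual ($\Xi=\mathbb{I}$), the $\Omega$-mapping reduces to $\Omega=\gamma^{0}\Delta\gamma^{0}=\Delta^{\dagger}$, so the set of admissible $\Omega$'s is in bijection with the invertible elements of $\mathcal{C}\ell_{1,3}$. Endowing this set with the multiplicative structure inherited from the Clifford product under the identification, it becomes the group of units $(\mathcal{C}\ell_{1,3})^{\times}$.

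Next, I would invoke the classical classification of real Clifford algebras: the spacetime algebra admits the algebra isomorphism $\mathcal{C}\ell_{1,3}\cong M_{2}(\mathbb{H})$, a standard result with matching real dimensions $\dim_{\mathbb{R}}\mathcal{C}\ell_{1,3}=16=\dim_{\mathbb{R}}M_{2}(\mathbb{H})$. Under this isomorphism, invertible multivectors correspond precisely to invertible $2\times 2$ quaternionic matrices, so the group of units satisfies $(\mathcal{C}\ell_{1,3})^{\times}\cong M_{2}(\mathbb{H})^{\times}=GL(2,\mathbb{H})$.

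The maximality assertion then follows because, by Theorem~\ref{A_real}, any admissible $G_{\Omega}$ is contained in the space of invertible elements of $\mathcal{C}\ell_{1,3}$, so no strictly larger admissible group can exist; the largest admissible $G_{\Omega}$ is therefore isomorphic, up to algebra isomorphism of the underlying parameter space, to $GL(2,\mathbb{H})$.

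The main obstacle I expect is twofold: (i) making the isomorphism $\mathcal{C}\ell_{1,3}\cong M_{2}(\mathbb{H})$ explicit enough, starting from the Weyl-representation $\gamma^{\mu}$, to read off the quaternionic structure unambiguously (for instance, by exhibiting an idempotent $f$ such that $f\,\mathcal{C}\ell_{1,3}\,f\simeq\mathbb{H}$ and decomposing a generic $A$ in terms of this basis); and (ii) reconciling the resulting group structure with the Abelian-closure requirement derived earlier for generic $G_{\Omega}\subset GL(4,\mathbb{C})$, which forces us to interpret the product in $G_{\Omega}$ as the Clifford (multivector) product transferred through the identification of Theorem~\ref{A_real}, rather than the raw matrix product of $\Omega$-matrices in $M_{4}(\mathbb{C})$.
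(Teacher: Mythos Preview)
Your proposal is correct and follows essentially the same line as the paper's own proof: invoke Theorem~\ref{A_real} to identify the admissible $\Delta$ (equivalently $\Omega$, once $\Xi=\mathbb{I}$) with generic invertible elements of $\mathcal{C}\ell_{1,3}$, pass to the group of units, and then use the classification isomorphism $\mathcal{C}\ell_{1,3}\simeq M_{2}(\mathbb{H})$ to conclude $(\mathcal{C}\ell_{1,3})^{\times}\simeq GL(2,\mathbb{H})$. The paper's argument is exactly this, stated in a single paragraph; your obstacle~(i) is handled by the paper only \emph{after} the proof, via the explicit embedding $\mathbb{H}\hookrightarrow M_{2}(\mathbb{C})$, and your obstacle~(ii) --- the tension with the earlier Abelian-closure condition --- is not addressed in the paper's proof at all, so you have in fact anticipated a subtlety the paper leaves implicit.
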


\begin{proof}
According to Theorem~\ref{A_real}, the multivector representation of the general $\Delta$ (or $\Omega$, as chosen in this section) constitutes a generic element of the algebra $\mathcal{C}\ell_{1,3}$. When restricting to the multiplicative structure, namely, the invertible elements under the Clifford product, the corresponding group is the group of units of the algebra, denoted by $\mathcal{C}\ell_{1,3}^{\times} = \{\omega \in \mathcal{C}\ell_{1,3} \;|\; \exists\; \omega^{-1} \in \mathcal{C}\ell_{1,3} \}$, which is the largest group in $\mathcal{C}\ell_{1,3}$. In view of the isomorphism $\mathcal{C}\ell_{1,3} \simeq M_{2}(\mathbb{H})$ \cite{Vaz:2016qyw}, the group $\mathcal{C}\ell_{1,3}^{\times}$ is represented, at the matrix level, by the group of invertible $2\times 2$ quaternionic matrices, that is, $GL(2,\mathbb{H})$.
\end{proof}

Each quaternion can be represented as a $2 \times 2$ complex matrix via the standard embedding $\mathbb{H} \hookrightarrow M_2(\mathbb{C})$, which induces a natural embedding of $GL(2,\mathbb{H})$ into $GL(4,\mathbb{C})$. Concretely, in a quaternionic matrix
$A = \begin{pmatrix} q_{11} & q_{12} \\ q_{21} & q_{22} \end{pmatrix}$
each $q \in \mathbb{H}$ is mapped to $M_2(\mathbb{C})$ by
$q = a + b\,i + c\,j + d\,k \;\mapsto\; \begin{pmatrix} a+ib & c+id \\ -c+id & a-ib \end{pmatrix}$, explicitly taking the form
\begin{equation}
 A = \begin{pmatrix} a_{11} & a_{12} & a_{13} & a_{14} \\
 -a_{12}^* & a_{11}^* & -a_{14}^* & a_{13}^*\\
 a_{31} & a_{32} & a_{33} & a_{34}\\
-a_{32}^* & a_{31}^* & -a_{34}^* & a_{33}^*
 \end{pmatrix}.
\end{equation}
Notice that the group $GL(2,\mathbb{H})$, and consequently the multivector $A$, has 16 degress of freedom. Consistent with the freedom of $\Delta$ or $\Omega$.

\begin{coroTEMAi}
Restricting $\Delta$ to the even subalgebra\footnote{It is the algebra of even elements in the multivector structure. Here $\hat{\omega}$ denotes the grade involution of $\omega$, defined in homogeneous elements by $\omega_{[p]} = (-1)^p \omega_{[p]}$ and extended by linearity \cite{Vaz:2016qyw}, where $\omega_{[p]}$ is the multivector of grade $p$.} $\mathcal{C}\ell_{1,3}^{+} = \{\omega \in \mathcal{C}\ell_{1,3} \; | \; \omega = \hat{\omega}\}$, the largest admissible $G_{\Omega}$ group is, up to isomorphism, the group $GL(2,\mathbb{C})$.
\end{coroTEMAi}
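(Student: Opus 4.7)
\emph{Plan.} The approach is to mirror the proof of the preceding theorem, replacing the group of units $\mathcal{C}\ell_{1,3}^{\times}$ by the group of units of the even subalgebra $\mathcal{C}\ell_{1,3}^{+}$, and then translating the conclusion through the standard classification isomorphism for even Clifford subalgebras. First I would invoke Theorem~\ref{A_real} to observe that restricting $\Delta$ (equivalently, its multivector avatar $A$) to $\mathcal{C}\ell_{1,3}^{+}$ amounts to imposing $A=\hat{A}$ on the generic multivector of eq.~\eqref{gen_A}, which kills the vector term $a_{\mu}\gamma^{\mu}$ and the trivector term $c_{\mu\nu\delta}\gamma^{\mu}\gamma^{\nu}\gamma^{\delta}$, leaving only the scalar $\alpha$, the bivector $B_{\mu\nu}\sigma^{\mu\nu}$, and the pseudoscalar $\beta\,\gamma^{5}$. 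Since the coefficients are already real by virtue of $\gamma^{0}A^{\dagger}\gamma^{0}=A$, a direct count yields $1+6+1=8$ independent real parameters.

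Next I would recall the well-known classification identities $\mathcal{C}\ell_{1,3}^{+}\simeq\mathcal{C}\ell_{3,0}\simeq M_{2}(\mathbb{C})$ (cf.~\cite{Vaz:2016qyw}). Because $\mathcal{C}\ell_{1,3}^{+}$ is an associative subalgebra containing the identity and closed under the Clifford product, its invertible elements form a group, and the isomorphism above sends this group to $GL(2,\mathbb{C})$. Hence $(\mathcal{C}\ell_{1,3}^{+})^{\times}\simeq GL(2,\mathbb{C})$, and, by the same reasoning used in the main theorem, this is the largest $G_{\Omega}$ compatible with the even-subalgebra restriction, since no larger even-graded associative subalgebra of $\mathcal{C}\ell_{1,3}$ exists. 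A sanity check confirms consistency: $GL(2,\mathbb{C})$ has real dimension $2\cdot 2^{2}=8$, matching the 8 real degrees of freedom just counted.

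The main obstacle I expect is the same one encountered in the theorem: translating the abstract algebra isomorphism into the concrete matrix embedding employed in the text. One has to verify that the Weyl-representation images of the scalar, bivector, and pseudoscalar generators of $\mathcal{C}\ell_{1,3}^{+}$ fit precisely into the block pattern that defines the copy of $GL(2,\mathbb{C})$ inside $GL(4,\mathbb{C})$, and that this block structure is preserved by matrix multiplication (so that closure is inherited from the Clifford product). This is a routine but unavoidable computation with the Weyl $\gamma$-matrices, entirely analogous in spirit to the block-form derivation behind Theorem~\ref{A_real}, and it is what converts the abstract group-theoretic statement into the explicit $GL(2,\mathbb{C})\hookrightarrow GL(4,\mathbb{C})$ realization announced in the corollary.
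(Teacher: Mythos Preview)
Your proposal is correct and follows essentially the same approach as the paper: both invoke the classification isomorphism $\mathcal{C}\ell_{1,3}^{+}\simeq\mathcal{C}\ell_{3,0}\simeq M_{2}(\mathbb{C})$ and conclude that the group of units is $GL(2,\mathbb{C})$, in direct analogy with the preceding theorem. The paper's proof is terser---it simply cites the classification theorem and stops---whereas you add the explicit parameter count ($1+6+1=8$), the dimension-matching sanity check against $\dim_{\mathbb{R}}GL(2,\mathbb{C})=8$, and a discussion of the block embedding into $GL(4,\mathbb{C})$; this last point is treated by the paper as well, but only in the text \emph{after} the proof rather than within it.
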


\begin{proof}
According to the classification theorems of Clifford algebras \cite{Vaz:2016qyw}, one has $(\mathcal{C}\ell^{+}_{1,3})^\times \simeq \mathcal{C}\ell_{3,0} \simeq M_{2}(\mathbb{C})$. In direct analogy with the preceding result, the group of units of the even subalgebra, $(\mathcal{C}\ell^{+}_{1,3})^{\times}$ is thus identified with the group of invertible $2\times 2$ complex matrices, i.e., $GL(2,\mathbb{C})$.
\end{proof}

From the perspective of group theory, several relevant subgroups emerge inside $(\mathcal{C}\ell^{+}_{1,3})^{\times} \simeq GL(2,\mathbb{C})$. The most important is the spin group $\rm{Spin}^+(1,3) \simeq SL(2,\mathbb{C})$, which provides the double covering of the restricted Lorentz group $SO^+(1,3)$. The relationships among these groups can be conveniently summarized by the following commutative diagram.

$$
\begin{tikzcd}[column sep=large, row sep=large]
 \mathcal{C}\ell_{1,3}^\times \arrow[r, "\simeq"]  & GL(2,\mathbb{H}) \arrow[rd, hook] &\\
(\mathcal{C}\ell_{1,3}^+)^\times \arrow[r, "\simeq"] \arrow[u, hook] & GL(2,\mathbb{C}) \arrow[r, hook] \arrow[u, hook]& GL(4,\mathbb{C}) \\
\operatorname{Spin}^+(1,3) \arrow[u, hook] \arrow[r, "\simeq"]& SL(2,\mathbb{C}) \arrow[ur, hook] \arrow[u, hook]&
\end{tikzcd}
$$
The matrix algebra $M(2,\mathbb{C})$ can be natural embedding into $M(4,\mathbb{C})$ by identifying each element of $M(2,\mathbb{C})$ with a block matrix acting on a doubled complex vector space. Explicitly, this embedding is given by
$$
\iota : M(2,\mathbb{C}) \hookrightarrow M(4,\mathbb{C}), \qquad
A \mapsto \begin{pmatrix}
A & 0 \\
0 & A
\end{pmatrix}.
$$
Consequently, $GL(2,\mathbb{C})$ may be viewed as the subgroup of $GL(4,\mathbb{C})$ consisting of block-diagonal matrices with identical $2\times 2$ blocks. In this broader setting, additional symmetries arise: the Pin group $\rm{Pin}(1,3)$, which incorporates reflections and is the double covering of the orthogonal group $O(1,3)$, as well as the full Clifford–Lipschitz group $\Gamma_{1,3} = \left\{
x \in \mathcal{C}\ell_{1,3}^\times \;\middle|\; x v x^{-1} \in \mathbb{R}^{1,3}, \;\forall v \in \mathbb{R}^{1,3}
\right\},$ form intermediate subgroups in the hierarchy of the Clifford algebras structure,
\begin{equation}
\mathrm{Spin}^+(1,3)  \subset \rm{Pin}(1,3) \subset \Gamma_{1,3} \subset \mathcal{C}\ell_{1,3}^{\times}.
\end{equation}
\noindent All of the groups discussed here may be used to induce different partitions of $\mathbb{S}^\star$, each corresponding to a distinct set of $\Omega-$equivalence classes of dual spinors. The diagram below summarizes the structural relationships among the groups discussed above. Solid lines represent proper subgroup inclusions, whereas dashed lines correspond to quotient structures.

\begin{tikzpicture}[every node/.style={draw, rounded corners, align=center, minimum width=3.5cm, minimum height=1.5cm, font=\small}, column sep=3cm, row sep=2cm]

  % Linha zero
  \node[fill=gray!20] (GL) at (0,3) {$GL(4,\mathbb{C})$ \\ Ambient space};

  % Primeira linha
  \node[fill=blue!20] (Cl) at (0,0) {$\mathcal{C}\ell_{1,3}^\times \simeq GL(2,\mathbb{H})$ \\ Group of units};
  \node[fill=green!20] (Gamma) at (4.2,0) {$\Gamma_{1,3}$ \\ Clifford-Lipschitz group};
  \node[fill=cyan!20] (ClPlus) at (8.4,0) {$(\mathcal{C}\ell_{1,3}^+)^\times \simeq GL(2,\mathbb{C})$ \\ Group of even units};

  % Segunda linha
  \node[fill=orange!30] (Pin) at (1.5,-3) {$\operatorname{Pin}(1,3)$ \\ Rotations + reflexions};
  \node[fill=red!30] (Spin) at (6.8,-3) {$\operatorname{Spin}(1,3) \simeq SL(2,\mathbb{C})$ \\ Connected subgroup};

  % Terceira linha
  \node[fill=yellow!10] (O) at (1.5,-6) {$O(1,3) \simeq \operatorname{Pin}(1,3)/\{\pm 1\}$ \\ Full Lorentz group};
  \node[fill=yellow!30] (SO) at (6.8,-6) {$SO^+(1,3) \simeq \operatorname{Spin}(1,3)/\{\pm 1\}$ \\ Proper Lorentz group};

  % Setas sólidas (inclusões)
  \draw[->, thick] (Cl) -- (GL);
  \draw[->, thick] (Spin) -- (Pin);
  \draw[->, thick] (Pin) -- (Gamma);
  \draw[->, thick] (Gamma) -- (Cl);
  \draw[->, thick] (ClPlus) to[bend right=40] (Cl);
  \draw[->, thick] (Spin) to (ClPlus);

  % Setas tracejadas (quocientes)
  \draw[->, dashed] (Spin) to (SO);
  \draw[->, dashed] (Pin) to (O);

\end{tikzpicture}

%%%%%%%%%%%%%%%%%%%%%%%%%%%%%%%%%%%%%%%%%%%%

\section{Concluding Remarks}\label{conclusao}

In this work, we advanced the broad investigation of spinor duals originally proposed in \cite{Ahluwalia:2016rwl} within the context of Elko spinors. After reviewing recent developments on general duals \cite{HoffdaSilva:2019ykt} and the corresponding dual mapping groups \cite{Cavalcanti:2020obq}, denoted by $G_\Omega$, we explored these generalized duals and their associated groups by explicitly employing the rich algebraic structure inherent to Clifford algebras, as introduced in \cite{Rogerio:2025avn}. This approach, besides being simpler and more elegant, unveils a variety of possible $G_\Omega$ groups. Consequently, it enables a new classification of spinor duals based on the $\Omega$-equivalence classes proposed here.

Having established the concept of $\Omega$-equivalence classes, we are now in a position to interpret the groups introduced in the previous section within this framework. Each candidate $G_{\Omega}$ acts on the space of spinors by generating orbits that correspond to distinct duality sectors, so that two dual theories are identified exactly when their defining spinors belong to the same $\Omega$-equivalence class. Consequently, the various $G_{\Omega}$ groups presented here provide different levels of refinement in this classification scheme: larger groups produce coarser identifications, while smaller ones generate more classes, and therefore may better distinguish between different duals. In this way, all of the introduced groups serve as natural tools for organizing and distinguishing possible dual spinor theories according to their group–theoretic equivalence classes.

%%%%%%%%%%%%%%%%%%%%%%%%%%%%%%%%%%%%%%%%%%%%

\section*{Appendix: Matrix Representation of $G_\Omega$ Elements} \label{matrices}

Here we explicitly depict matrices appearing along the main text:
%%%%%%%%%%%%%%%%%%%%%%%%%%%%%%%%%%%%%%%G
%\begin{minipage}[center]{0.5\textwidth}
\begin{small}
\begin{align*}
 \mathcal{G}(\phi)=\left[\begin{array}{cccc}
0 & 0 & 0 & -ie^{-i\phi} \\
0 & 0 & ie^{i\phi} & 0 \\
0 & -ie^{-i\phi} & 0 & 0 \\
ie^{i\phi} & 0 & 0 & 0
\end{array}\right],
\end{align*}
\end{small}
%\end{minipage}

%%%%%%%%%%%%%%%%%%%%%%%%%%%%%%%%%%%%%F
%\begin{minipage}[center]{0.5\textwidth}
\begin{small}
\begin{align*}
\mathcal{F}(\theta,\phi)=\left[
\begin{array}{cccc}
 0 & 0 & -\sin \theta & e^{-i \phi } \cos \theta \\
 0 & 0 & e^{i \phi } \cos \theta & \sin \theta \\
 \sin \theta & -e^{-i \phi } \cos \theta & 0 & 0 \\
 -e^{i \phi } \cos \theta & -\sin \theta & 0 & 0 \\
\end{array}
\right],
\end{align*}
\end{small}
%\end{minipage}

%%%%%%%%%%%%%%%%%%%%%%%%%%%%%%%%%%%%%Xi
% \begin{minipage}[center]{0.5\textwidth}
\begin{footnotesize}
\begin{align*}
\Xi^\dagger=-\frac{i}{m}\left[
\begin{array}{cccc}
 { p \sin \theta } & { e^{-i \phi } (\text{E}-p \cos \theta)} & 0 & 0 \\
 { -e^{i \phi } (\text{E}+p \cos \theta)} & -{ p \sin \theta} & 0 & 0 \\
 0 & 0 & -{ p \sin \theta} & { e^{-i \phi } (\text{E}+p \cos \theta)} \\
 0 & 0 & -{ e^{i \phi } (\text{E}-p \cos \theta)} & { p \sin \theta} \\
\end{array}
\right],
\end{align*}
\end{footnotesize}
% \end{minipage}

%%%%%%%%%%%%%%%%%%%%%%%%%%%%%%%%%%%%%H
% \begin{minipage}[center]{0.5\textwidth}
\begin{footnotesize}
\begin{align*}
\mathcal{H}=\left[
\begin{array}{cccc}
 \text{E}^2+2 p \cos \theta \text{E}+p^2 & 2 e^{-i \phi } \text{E} p \sin \theta & 0 & 0 \\
 2 e^{i \phi } \text{E} p \sin \theta & \text{E}^2-2 p \cos \theta \text{E}+p^2 & 0 & 0 \\
 0 & 0 & \text{E}^2-2 p \cos \theta \text{E}+p^2 & -2 e^{-i \phi } \text{E} p \sin \theta \\
 0 & 0 & -2 e^{i \phi } \text{E} p \sin \theta & \text{E}^2+2 p \cos \theta \text{E}+p^2 \\
\end{array}
\right],
\end{align*}
\end{footnotesize}
% \end{minipage}

%%%%%%%%%%%%%%%%%%%%%%%%%%%%%%%%%%%%%H^-1
% \begin{minipage}[center]{0.5\textwidth}
\begin{footnotesize}
\begin{align*}
\mathcal{H}^{-1}=\left[
\begin{array}{cccc}
 {\text{E}^2-2 p \cos \theta \text{E}+p^2} & -2 e^{-i \phi } \text{E} p \sin \theta & 0 & 0 \\
 -2 e^{i \phi } \text{E} p \sin \theta& \text{E}^2+2 p \cos \theta \text{E}+p^2 & 0 & 0 \\
 0 & 0 & \text{E}^2+2 p \cos \theta \text{E}+p^2 & 2 e^{-i \phi } \text{E} p \sin \theta \\
 0 & 0 & 2 e^{i \phi } \text{E} p \sin \theta & \text{E}^2-2 p \cos \theta \text{E}+p^2
\end{array}
\right].
\end{align*}
\end{footnotesize}
% \end{minipage}

\label{sec7}
\subsection*{\textbf{Acknowledgements}}
 RTC thanks the National Council for Scientific and Technological Development - CNPq (Grant No. 401567/2023-0), for partial financial support. JMHS thanks to CNPq (grant No. 307641/2022-8) for financial support.

% \section*{References}
% \bibliography{references}% Produces the bibliography via BibTeX.

\end{document}